\def\calA{\mathcal{A}}
\newtheorem{lemma}{Lemma}
\newtheorem{theorem}{Theorem}
\newenvironment{proof}{\noindent {\textbf{Proof:}}\rm}{\hfill $\Box$ \rm\bigskip}
\title{A Simple Algorithm for Computing the Zone of a Line in an Arrangement of Lines\thanks{This research was supported in part by NSF under Grant CCF-2005323.}
}
\author{
Haitao Wang
}
\affil{Department of Computer Science \\
Utah State University, Logan, UT 84322, USA \\ {\tt haitao.wang@usu.edu}}
\begin{document}

\pagestyle{plain}
\pagenumbering{arabic}
\setcounter{page}{1}
\date{}
\maketitle

\vspace{-0.35in}
\begin{abstract}
Let $L$ be a set of $n$ lines in the plane. The zone $Z(\ell)$ of a line $\ell$ in the arrangement $\mathcal{A}(L)$ of $L$ is the set of faces of $\mathcal{A}(L)$ whose closure intersects $\ell$. It is known that the combinatorial size of $Z(\ell)$ is $O(n)$. Given $L$ and $\ell$, computing $Z(\ell)$ is a fundamental problem.
Linear-time algorithms exist for computing $Z(\ell)$ if $\mathcal{A}(L)$ has already been built, but building $\mathcal{A}(L)$ takes $O(n^2)$ time. On the other hand, $O(n\log n)$-time algorithms are also known for computing $Z(\ell)$ without relying on $\mathcal{A}(L)$, but these algorithms are relatively complicated.
In this paper, we present a simple algorithm that can compute $Z(\ell)$ in $O(n\log n)$ time. More specifically, once the sorted list of the intersections between $\ell$ and the lines of $L$ is known, the algorithm runs in $O(n)$ time.
A big advantage of our algorithm, which mainly involves a Graham's scan style procedure, is its simplicity.
\end{abstract}


\section{Introduction}
\label{sec:intro}
Given a set $L$ of $n$ lines in the plane, let $\calA(L)$ denote the arrangement of the lines of $L$, i.e., the subdivision of the plane induced by $L$. For a line $\ell$, the {\em zone} of $\ell$ in the arrangement $\calA(L)$ is the set of faces of $\calA(L)$ whose closure intersects $\ell$ (see Fig.~\ref{fig:zonefig}); we use $Z(\ell)$ to denote the zone of $\ell$. Given $L$ and $\ell$, we consider the problem of constructing $Z(\ell)$.

It has been proved that the combinatorial size of the zone $Z(\ell)$ is bounded by $O(n)$~\cite{ref:ChazelleTh85,ref:EdelsbrunnerCo86,ref:EdelsbrunnerOn93,ref:EdelsbrunnerAr92,ref:EdelsbrunnerAl87,ref:BernHo91}.
The problem of computing $Z(\ell)$ is a fundamental problem in computational geometry. If the arrangement $\calA(L)$ has already been explicitly constructed, then $Z(\ell)$ can be computed in $O(n)$ time~\cite{ref:ChazelleTh85,ref:EdelsbrunnerCo86}. Indeed, this leads to an incremental algorithm for constructing the arrangement $\calA(L)$ in $O(n^2)$ time~\cite{ref:ChazelleTh85,ref:EdelsbrunnerCo86}. Without having the arrangement $\calA(L)$, computing $Z(\ell)$ can be done in $O(n\log n)$ time. For example, Alevizos, Boissonnat, and Preparata~\cite{ref:AlevizosAn90} proved that the size of any cell in an arrangement of a set of $n$ rays in the plane is $O(n)$ and gave an $O(n\log n)$ time algorithm to construct any cell. Their algorithm can be used to compute $Z(\ell)$ in $O(n\log n)$ time. Indeed, we can cut each line of $L$ into two rays at its intersection with $\ell$. Then, for each side of $\ell$, we compute the cell containing $\ell$ in the arrangement of the rays on that side of $\ell$. The zone $Z(\ell)$ can be obtained from the two cells computed above.

In this paper, we present a new algorithm for computing the zone $Z(\ell)$ in $O(n\log n)$ time. More specifically, once the sorted list of all intersections between $\ell$ and the lines of $L$ is known, the algorithm runs in $O(n)$ time. In contrast, even if the above sorted list is known, applying the algorithm of~\cite{ref:AlevizosAn90} to compute $Z(\ell)$ still takes $O(n\log n)$ time because the algorithm involves sweeping line procedures that are modifications of the classical algorithm for computing the intersections of line segments. A big advantage of our algorithm is that it is quite simple. Indeed, a main process of our algorithm is a Graham's scan style procedure, which is a textbook level algorithm.
As computing $Z(\ell)$ is a fundamental problem and many algorithms use it as a subroutine (e.g.,~\cite{ref:AgarwalPa902,ref:AgarwalCo98,ref:EdelsbrunnerIm89}), it is worth pursuing a simple algorithm.

Many other problems in arrangements of lines or other curves are also fundamental and have been extensively studied. We refer the reader to \cite{ref:AgarwalAr00,ref:EdelsbrunnerAl87,ref:HalperinAr17,ref:SharirDa95} for some excellent books and surveys.

\begin{figure}[t]
\begin{minipage}[t]{0.49\textwidth}
\begin{center}
\includegraphics[height=1.8in]{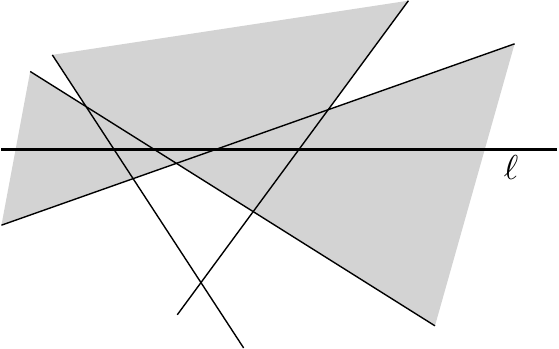}
\caption{\footnotesize The shaded region is the zone $Z(\ell)$.}
\label{fig:zonefig}
\end{center}
\end{minipage}
\hspace{-0.02in}
\begin{minipage}[t]{0.49\textwidth}
\begin{center}
\includegraphics[height=1.8in]{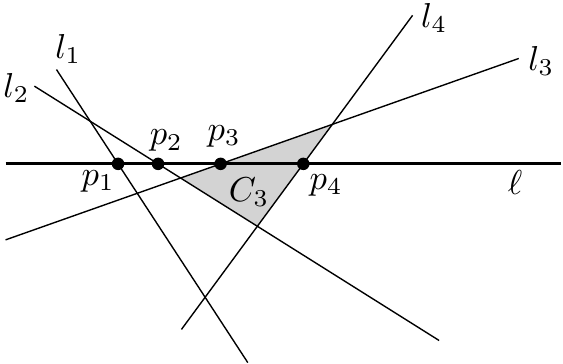}
\caption{\footnotesize Illustrating the points $p_i$ and a cell $C_3$.}
\label{fig:cell}
\end{center}
\end{minipage}
\vspace{-0.15in}
\end{figure}

\section{The algorithm}

Without loss of generality, we assume that $\ell$ is the $x$-axis. To simplify the discussion, we make a general position assumption that no line of $L$ is horizontal and no two lines of $L$ have an intersection on $\ell$. We will discuss at the end of this section that degenerate cases can be easily handled, although  standard techniques~\cite{ref:EdelsbrunnerTo89,ref:EdelsbrunnerSi90} could be applied too.

A curve $\gamma$ in the plane is {\em $y$-monotone} if any horizontal line either does not intersect $\gamma$ or it intersects $\gamma$ at a single point. We say that a $y$-monotone convex curve $\gamma$ is {\em rightward} (resp. {\em leftward}) if $\gamma$ always makes right turns from its lower endpoint to its upper endpoint.

Due to the general position assumption, every line of $L$ intersects $\ell$ at a point. We start by computing the intersections between $\ell$ and all lines of $L$, and then sort them. This takes $O(n\log n)$ time. The rest of the algorithm runs in $O(n)$ time.
Let $Z^+(\ell)$ denote the portion of the zone $Z(\ell)$ above $\ell$ and $Z^-(\ell)$ the portion of $Z(\ell)$ below $\ell$. In the following, we describe an algorithm to compute $Z^+(\ell)$ in $O(n)$ time; $Z^-(\ell)$ can be computed in $O(n)$ time analogously.

Let $l_1,l_2,\ldots,l_n$ be the sorted list of the lines of $L$ from left to right by their intersections with $\ell$. Due to our general position assumption, this order is unique.
For each $1\leq i\leq n$, define $p_i$ as the intersection of $l_i$ and $\ell$ (see Fig.~\ref{fig:cell}). The point $p_i$ divides $l_i$ into two half-lines, and we use $l_i^+$ to refer to the one above $\ell$.
Hence, $p_i$ is the lower endpoint of $l_i^+$; for reference purpose, we also assume that $l_i^+$ has an upper endpoint at infinity and use $p_i'$ to denote it.
For convenience, let $p_0$ represent the left endpoint of $\ell$ at $-\infty$ and $p_{n+1}$ the right endpoint of $\ell$ at $\infty$.

It is easy to see that for each $0\leq i\leq n$, the segment $\overline{p_ip_{i+1}}$ is contained in a single cell of $\calA(L)$, denoted by $C_i$, which is also a cell in $Z(\ell)$ (see Fig.~\ref{fig:cell}). Denote by $C_i^+$ the portion of $C_i$ above $\ell$. Observe that $Z^+(\ell)$ is the disjoint union of cells $C_i^+$ for all $i=0,1,\ldots,n$.
Hence, it suffices to compute the cells $C_i^+$ for all $i=0,1,\ldots,n$.

\begin{figure}[h]
\begin{minipage}[t]{\textwidth}
\begin{center}
\includegraphics[height=2.5in]{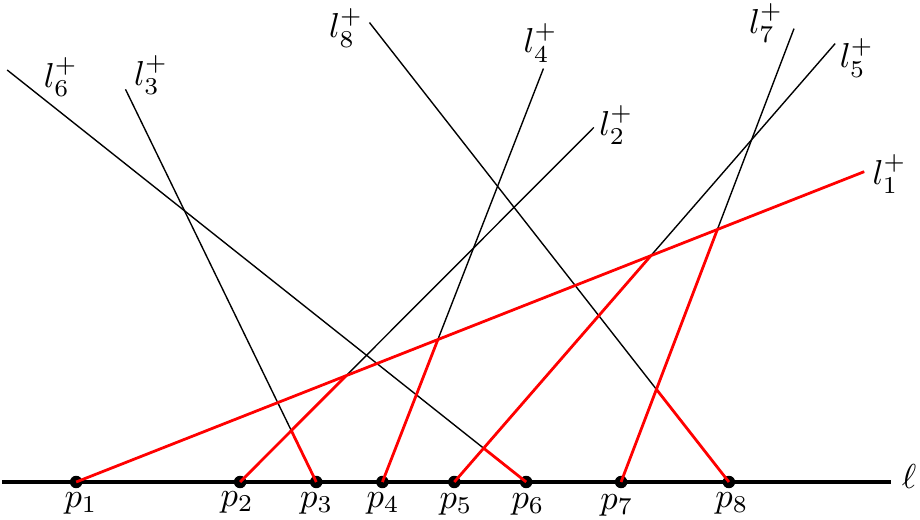}
\caption{\footnotesize The forward forest $F$ is colored red.}
\label{fig:tree}
\end{center}
\end{minipage}
\end{figure}

\begin{figure}[h]
\begin{minipage}[t]{\textwidth}
\begin{center}
\includegraphics[height=2.5in]{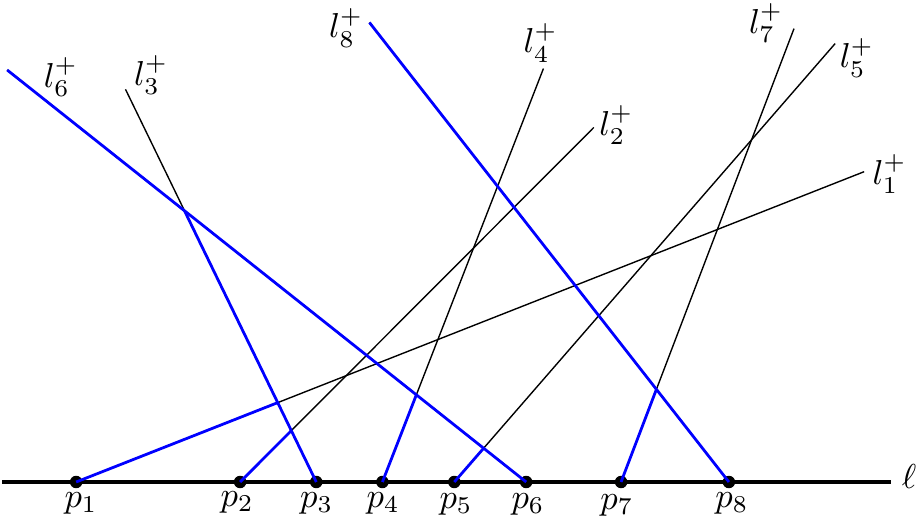}
\caption{\footnotesize The backward forest $F'$ is colored blue.}
\label{fig:backtree}
\end{center}
\end{minipage}
\vspace{-0.15in}
\end{figure}

\paragraph{Forward and backward forests.}
With respect to the index order $1,2,\ldots,n$, we define a {\em forward forest} $F$ as follows (see Fig.~\ref{fig:tree}). Let $F_1=l_1^+$. For each $2\leq i\leq n$, $F_i$ is obtained from $F_{i-1}$ by adding to $F_{i-1}$ the segment of $l^+_i$ from $p_i$ to its first intersection with $F_{i-1}$ (if $l_i^+$ does not intersect $F_{i-1}$, then the entire $l^+_i$ is included in $F_i$). Let $F=F_n$. It is not difficult to see that $F$ has at most $2n-1$ edges because $F_i$ has at most two more edges than $F_{i-1}$.
We can view $F$ as a forest in which the leaves are the points $p_i$ for all $1\leq i\leq n$ and the roots are upper endpoints of some half-lines $l_i^+$ (e.g., in Fig.~\ref{fig:tree}, $F$ consists of only one tree, whose root is the upper endpoint of $l_1^+$). That is why we call $F$ a forest. Notice that if we move from a point $p_i$ along $F$ until the root of the tree containing $p_i$, we always turn rightward, i.e., the path from $p_i$ to the root is a $y$-monotone rightward convex chain.

Similarly, we define a {\em backward forest} $F'$ with respect to the inverse index order $n,n-1,\ldots,1$ (see Fig.~\ref{fig:backtree}, where $F'$ consists of two trees with roots at the upper endpoints of $l_8^+$ and $l_6^+$, respectively). The path from each leaf $p_i$ to its root in $F'$ is a $y$-monotone leftward convex chain

We remark that the forward/backward forest is very similar to the leftist/rightist skeleton in~\cite{ref:AlevizosAn90} as well as the upper/lower horizon tree in~\cite{ref:EdelsbrunnerTo89}. Note that it might also be possible to use the topologically sweeping method of~\cite{ref:EdelsbrunnerTo89} to construct $Z^+(\ell)$. However, the algorithm needs to dynamically update the upper/lower horizon trees after processing each event. In contrast, as will be seen later, our algorithm does not need to update the forests, which is not only simpler but also simple.

\paragraph{Forest decomposition and a Graham's scan style algorithm.}
We will use both forests $F$ and $F'$ to construct the cells $C^+_i$ for all $1\leq i\leq n$. To this end, we describe a Graham's scan style algorithm to compute $F$ (the other forest $F'$ can be computed analogously). As will be seen, the algorithm also naturally decomposes $F$ into $n$ $y$-monotone rightward convex chains $\alpha_1,\alpha_2,\ldots,\alpha_n$, such that $F$ is the edge-disjoint union of all these chains (i.e., each edge of $F$ belongs to one and only one chain; see Fig.~\ref{fig:decomp}).
For each $i$, the lower endpoint of $\alpha_i$ is $p_i$, and $\alpha_i$ is a sub-path of the path from $p_i$ to the root of the tree containing $p_i$. Specifically, $\alpha_n$ is the path from $p_n$ to its root in $F$. For each $1\leq i\leq n-1$, $\alpha_i$ is the sub-path from $p_i$ to its tree root until the first vertex in the union of the chains $\alpha_{i+1},\alpha_{i+2},\ldots,\alpha_n$.

\begin{figure}[t]
\begin{minipage}[t]{\textwidth}
\begin{center}
\includegraphics[height=2.5in]{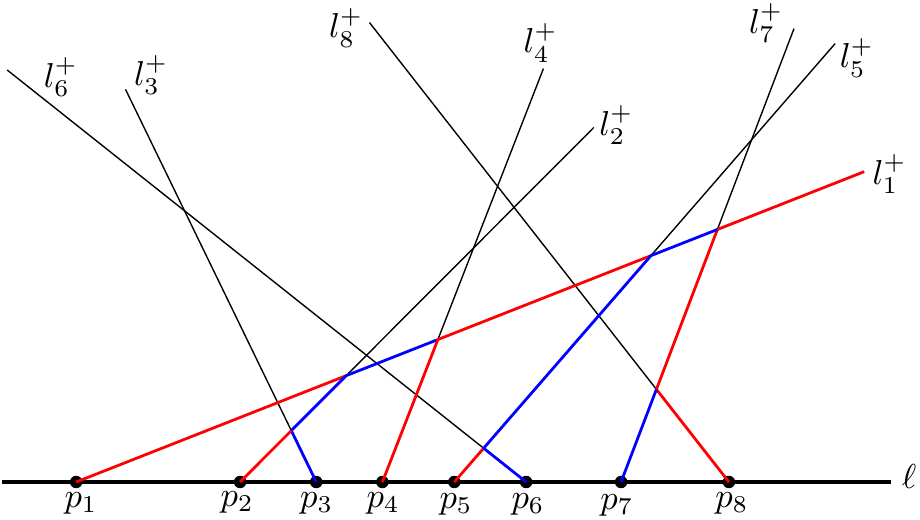}
\caption{\footnotesize Illustrating the decomposition of $F$ into $y$-monotone rightward convex chains $\alpha_i$, $1\leq i\leq n$: Each chain $\alpha_i$ is distinguished with the same color starting from the point $p_i$. For example, the blue chain starting from $p_6$ is $\alpha_6$, which has three segments.}
\label{fig:decomp}
\end{center}
\end{minipage}
\vspace{-0.15in}
\end{figure}

Our algorithm will maintain a $y$-monotone rightward convex chain, denoted by $\alpha$.
Initially, we set $\alpha=l_1^+$. We process $l_i^+$ iteratively following the order $i=2,3,\ldots,n$. For $i=2$, we first check whether $l_2^+$ intersects $\alpha$. If yes (let $q$ be the intersection of $l_2^+$ and $\alpha$), then $\alpha_1$ is defined as $\overline{p_1q}$ and $\alpha$ is updated to the union of $\overline{p_2q}$ and $\overline{qp_1'}$ (recall that $p_1'$ denotes the upper endpoint of $l_1^+$ at infinity); otherwise, $\alpha_1$ is defined as $l_1^+$ and $\alpha$ is updated to $l_2^+$. In general, right before the $i$-th iteration (i.e., right before $l_i^+$ is processed), a $y$-monotone rightward convex chain $\alpha$ is maintained and the lower endpoint of $\alpha$ is $p_{i-1}$ (see Fig.~\ref{fig:chain}). The general algorithm for the $i$-th iteration works as follows. By traversing on $\alpha$ from its lower endpoint $p_{i-1}$, we find the intersection $q$ between $\alpha$ and $l_i^+$ (see Fig.~\ref{fig:chain}). If $q$ exists, then $\alpha_{i-1}$ is defined as the portion of $\alpha$ between $p_{i-1}$ and $q$, and $\alpha$ is updated to the union of $\overline{p_iq}$ and $\alpha\setminus \alpha_{i-1}$. If $q$ does not exist, then $\alpha_{i-1}=\alpha$ and $\alpha=l_i^+$. In either case, the new $\alpha$ is  a $y$-monotone rightward convex chain with lower endpoint at $p_i$. We then proceed on the next iteration. The algorithm stops once $l^+_n$ is processed, after which $F$ and all chains $\alpha_1,\alpha_2,\ldots,\alpha_n$ are obtained.

\begin{figure}[h]
\begin{minipage}[t]{\textwidth}
\begin{center}
\includegraphics[height=2.5in]{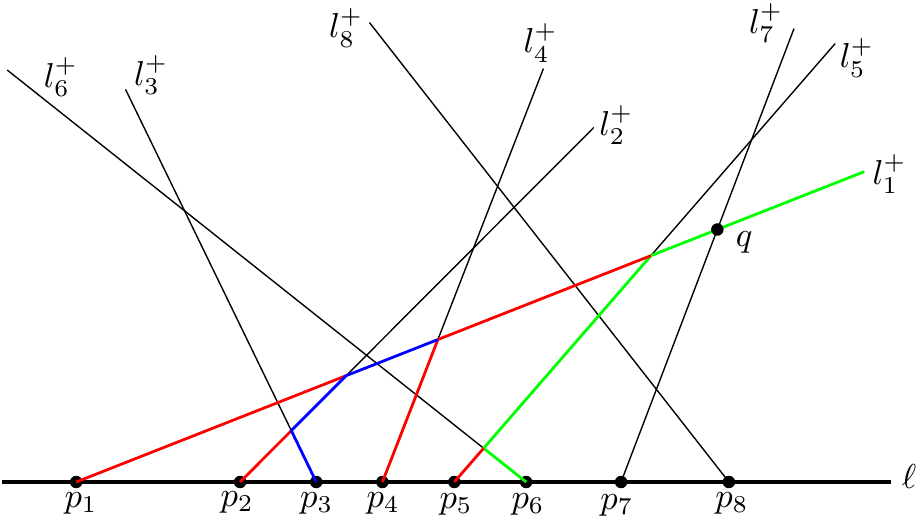}
\caption{\footnotesize Illustrating the chain $\alpha$ (colored green) right before $l_7^+$ is processed. After $l_7^+$ is processed, $\alpha_6$ becomes the portion of $\alpha$ between $p_6$ and $q$, and $\alpha$ is updated to $\overline{p_7q}\cup \overline{qp_1'}$, where $p_1'$ is the upper endpoint of $l_1^+$ at infinity.}
\label{fig:chain}
\end{center}
\end{minipage}
\vspace{-0.15in}
\end{figure}

For the time analysis, observe that each $i$-th iteration takes time proportional to the number of edges of the chain $\alpha_{i-1}$ because we traverse $\alpha$ starting from $p_{i-1}$. As $F$ is the edge-disjoint union of all chains $\alpha_{i}$, $1\leq i\leq n$, the total number of edges of all chains is equal to the number of edges of $F$, which is at most $2n-1$. Hence, the time of the algorithm is $O(n)$.


\begin{figure}[t]
\begin{minipage}[t]{\textwidth}
\begin{center}
\includegraphics[height=2.5in]{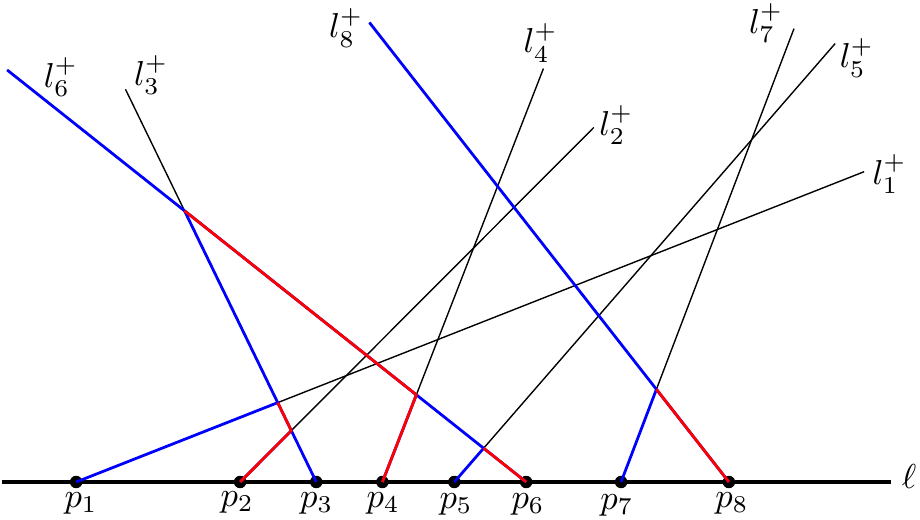}
\caption{\footnotesize Illustrating the decomposition of $F'$ into $y$-monotone leftward convex chains $\beta_i$, $1\leq i\leq n$: Each chain $\beta_i$ is distinguished with the same color starting from the point $p_i$. For example, the blue chain starting from $p_7$ is $\beta_7$, which has two segments.}
\label{fig:decompback}
\end{center}
\end{minipage}
\end{figure}

Analogously, we can decompose the backward forest $F'$ into $n$ $y$-monotone leftward convex chains $\beta_1, \beta_2, \ldots, \beta_n$, such that $F'$ is the edge-disjoint union of all these chains (see Fig.~\ref{fig:decompback}). For each $i$, the lower endpoint of $\beta_i$ is $p_i$. Specifically, $\beta_1$ is the path from $p_1$ to its root in $F'$. For each $2\leq i\leq n$, $\beta_i$ is the sub-path from $p_i$ to its tree root until the first vertex in the union of the chains $\beta_{1},\beta_{2},\ldots,\beta_{i-1}$.
The forest $F'$, along with the chains $\beta_i$, $1\leq i\leq n$, can be computed in $O(n)$ time by an algorithm similar to the above for $F$.


\paragraph{Computing the cells $C_i^+$.}
We are now ready to compute the cells $C_i^+$, $1\leq i\leq n$, using the convex chains of $F$ and $F'$ computed above. To this end, the following lemma is critical.
Let $\partial C_i^+$ denote the boundary of $C_i^+$.

\begin{figure}[t]
\begin{minipage}[t]{\textwidth}
\begin{center}
\includegraphics[height=2.5in]{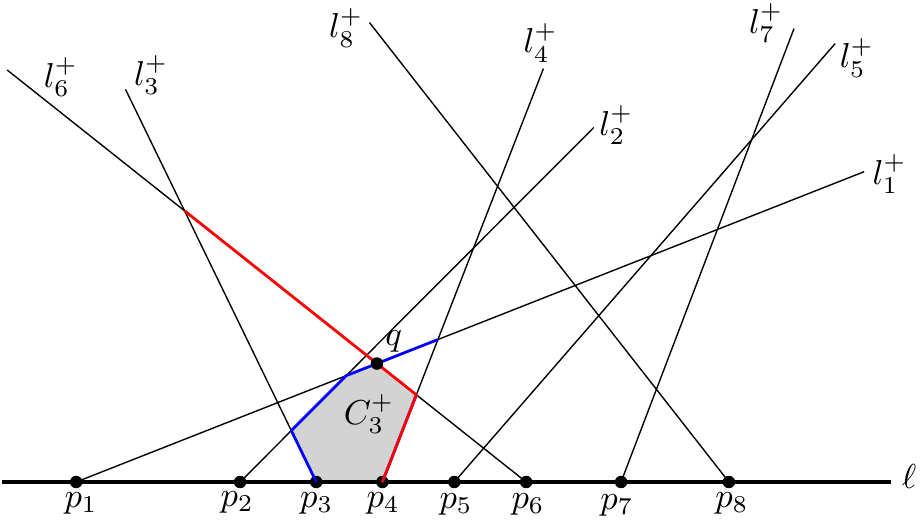}
\caption{\footnotesize The blue chain is $\alpha_3$ and the red chain is $\beta_4$, and they intersect at $q$. The grey region is $C_3^+$, whose boundary consists of $\overline{p_3p_{4}}$, the portion of $\alpha_3$ between $p_3$ and $q$, and the portion of $\beta_{4}$ between $p_{4}$ and $q$, with $q=\alpha_3\cap \beta_4$.}
\label{fig:cellbound}
\end{center}
\end{minipage}
\vspace{-0.15in}
\end{figure}

\begin{lemma}\label{lem:10}
For each $1\leq i\leq n$, the boundary $\partial C_i^+$ can be identified as follows.
\begin{itemize}
  \item For $1\leq i\leq n-1$, if $\alpha_i$ and $\beta_{i+1}$ intersect, say, at a point $q$, then $\partial C_i^+$ consists of the following three parts: $\overline{p_ip_{i+1}}$, the portion of $\alpha_i$ between $p_i$ and $q$, and the portion of $\beta_{i+1}$ between $p_{i+1}$ and $q$ (see Fig.~\ref{fig:cellbound}); otherwise, $\partial C_i^+=\overline{p_ip_{i+1}}\cup \alpha_i\cup\beta_{i+1}$.
  \item For $i=n$, $\partial C_i^+=\overline{p_np_{n+1}}\cup \alpha_n$ (see Fig.~\ref{fig:decomp}).
  \item For $i=0$, $\partial C_i^+=\overline{p_0p_{1}}\cup \beta_1$ (see Fig.~\ref{fig:decompback}).
\end{itemize}
\end{lemma}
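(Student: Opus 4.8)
The plan is to reduce the lemma to two symmetric claims about the two ``walls'' of $C_i^+$ and then establish each wall by a local orientation argument combined with a horizon-tree property of the forests. Since $C_i$ is a cell of a line arrangement it is convex, and $C_i^+=C_i\cap\{y\ge 0\}$ is convex as well; its boundary therefore splits into the base edge $\overline{p_ip_{i+1}}$ on $\ell$, a left wall $\lambda_i$ that is a $y$-monotone rightward convex chain emanating upward from $p_i$, and a right wall $\rho_i$ that is a $y$-monotone leftward convex chain emanating upward from $p_{i+1}$; the two walls either meet at a single highest vertex $q$ (bounded $C_i^+$), or both escape to infinity (unbounded $C_i^+$). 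With this picture in place, the lemma reduces to proving that $\lambda_i$ equals the portion of $\alpha_i$ between $p_i$ and $q$, and $\rho_i$ equals the portion of $\beta_{i+1}$ between $p_{i+1}$ and $q$, with $q=\alpha_i\cap\beta_{i+1}$; the three displayed cases are then just the assembly of these pieces, the cases $i=n$ and $i=0$ being the ones in which $\rho_i$, resp.\ $\lambda_i$, degenerates into the infinite ray of $\ell$ and the opposite wall is the whole of $\alpha_n$, resp.\ $\beta_1$.

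The heart of the argument is the identity $\lambda_i=$ (prefix of $\alpha_i$); the statement for $\rho_i$ and $\beta_{i+1}$ is symmetric, obtained by running the same argument on $F'$ with the roles of left/right and of the index order reversed. First I would record a one-line orientation fact: if a half-line through a point on $\ell$ is traced upward and first meets another line $l_m$ at a point $u$, then continuing upward onto $l_m$ is a right turn iff $p_m$ lies to the left of the starting intercept; indeed, writing the turn as the cross product $(u-p_a)\times(u-p_m)=u_y\,(x_m-x_a)$ of the two upward directions, its sign is exactly the sign of $x_m-x_a$. Tracing $\lambda_i$ from $p_i$, its first edge lies on $l_i^+$ because $p_{i+1}$ is to the right of $p_i$, so $C_i^+$ is locally to the right of $l_i$. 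At the first crossing of the current edge with another line, the rightward convexity of $\lambda_i$ forces a right turn, so by the orientation fact the crossing line has a smaller intercept, i.e.\ a smaller index; conversely a crossing with a larger-index line would be a left turn, impossible on $\lambda_i$, and must therefore be the apex $q$ at which $\rho_i$ takes over. Hence the indices of the lines supporting $\lambda_i$ strictly decrease, $\lambda_i$ uses only lines among $l_1,\dots,l_i$, and each of its turns is a truncation event of the forward-forest construction.

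It remains to argue that the turning points of $\lambda_i$ are exactly the first intersections with the forest $F_{i-1}$, not merely with the underlying lower-index lines; this is the horizon-tree property and is the step I expect to be the main obstacle. The subtlety is that $F_{i-1}$ carries truncated copies of the lower half-lines, so a priori the first crossing of $l_i^+$ with a lower-index line might occur above where that line survives in $F_{i-1}$. I would rule this out by the observation that whenever a lower line $l_m$ has been truncated in $F$ below a candidate crossing $u$ (because some still-lower line $l_k$ cut it off), the cutting line $l_k$ itself crosses the traced edge below $u$, pre-empting $u$; thus the genuinely first crossing always lands on a portion that survives in $F_{i-1}$, so $\lambda_i$ follows $F$ edge by edge. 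This identifies $\lambda_i$ with the initial part of the $F$-path from $p_i$, and since the forward decomposition assigns to $\alpha_i$ exactly this path up to the first vertex shared with a higher-index chain --- which, by the orientation dichotomy above, is precisely the apex $q$ where the larger-index line forming $\rho_i$ enters --- we obtain $\lambda_i=\alpha_i$ up to $q$. The symmetric argument gives $\rho_i=\beta_{i+1}$ up to $q$, and the coincidence of the two apices yields $q=\alpha_i\cap\beta_{i+1}$; when no such intersection exists the cell is unbounded and $\partial C_i^+=\overline{p_ip_{i+1}}\cup\alpha_i\cup\beta_{i+1}$. Finally, the endpoint cases $i\in\{0,n\}$ and the degeneracies excluded by general position are routine.
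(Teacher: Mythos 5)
Your route is genuinely different from the paper's: the paper never traces the cell boundary edge by edge, but instead characterizes $C_i^+$ as an intersection $A_i\cap B_{i+1}$ of two regions read off from the two forests and then compares left/right boundaries of these regions. Several of your ingredients are sound and even supply detail the paper glosses over: the cross-product orientation fact is correct, and your pre-emption argument for the horizon-tree property (if the first crossing of $l_i^+$ with $\bigcup_{j<i}l_j^+$ landed above the truncation point of $l_m^+$ in $F_{i-1}$, the line that did the truncating would cross $l_i^+$ even lower) is a correct proof of a fact the paper essentially asserts ``by the definition of $\alpha_i$.'' Up to that point you have legitimately shown that the left wall $\lambda_i$ is an initial portion of the path of $F$ from $p_i$, terminated at the apex.

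The gap is in the last step, and it is genuine: you claim that the terminal vertex of $\alpha_i$ (the first vertex of the $F$-path from $p_i$ shared with a higher-index chain) ``is precisely the apex $q$.'' This is false in general, and the two points are of different natures. The terminal vertex of $\alpha_i$ is a vertex of the forward forest $F$ (it is where $l_{i+1}^+$ attaches to $F$), whereas the apex is the crossing of $\alpha_i$ with $\beta_{i+1}\subseteq F'$, which generically lies in the \emph{interior} of an edge of $\alpha_i$, strictly below $\alpha_i$'s terminal vertex, and is not a vertex of $F$ at all. Concretely, take three lines with $l_1,l_2$ crossing at $u_{12}$ above $\ell$ and $l_3$ crossing $l_2$ below $u_{12}$: then $\alpha_1$ is the segment of $l_1$ up to $u_{12}$, but the apex of $C_1^+$ is $l_1\cap l_3$, strictly below $u_{12}$ and absent from $F$ (since $l_3$ was truncated at $l_2$). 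The paper's own Fig.~\ref{fig:cellbound} shows exactly this situation, with $q$ in the middle of $\alpha_3$. What your argument actually needs is the inequality, not the equality: the apex (the first crossing of the $F$-path with a line of index $\geq i+1$) occurs \emph{at or below} the terminal vertex of $\alpha_i$, so that the wall you identified with a prefix of the $F$-path stays inside $\alpha_i$; and symmetrically for $\rho_i$ and $\beta_{i+1}$. Proving that requires knowing that the terminal vertex of $\alpha_i$ is itself a crossing of the path with a line of index $i+1$ --- i.e., that $\alpha_i$ ends exactly at the attachment point of $l_{i+1}^+$, which in turn rests on showing that $l_{i+1}^+$ attaches to the $F$-path from $p_i$ and that no higher-index line attaches below that point. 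This is true and provable with the horizon-tree machinery you set up, but the proposal replaces it with a false identification, so as written the lemma (in particular that $\partial C_i^+$ is built from the chains $\alpha_i$, $\beta_{i+1}$ rather than from the full root paths, and the unbounded case) does not follow.
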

\begin{proof}
We only prove the general case $1\leq i\leq n-1$, since the other two special cases can be proved analogously (and in an easier way).

In the following, unless otherwise stated, all points in question are above $\ell$. Also, when we say a point $p$ is to the right (resp., left) of a half-line $l_i^+$, it includes the case $p\in l_i^+$, i.e., the $x$-coordinate of $p$ is larger than (resp., smaller than) or equal to that of $p'$, where $p'$ is the intersection between $l_i^+$ and the horizontal line through $p$.

We first have the following observation about $C_i^+$: a point $p$ is in $C^+_i$ if and only if $p$ is to the right of $l^+_j$ for all $j\leq i$ and is also to the left of $l^+_j$ for all $j\geq i+1$.

\begin{figure}[h]
\begin{minipage}[t]{\textwidth}
\begin{center}
\includegraphics[height=2.5in]{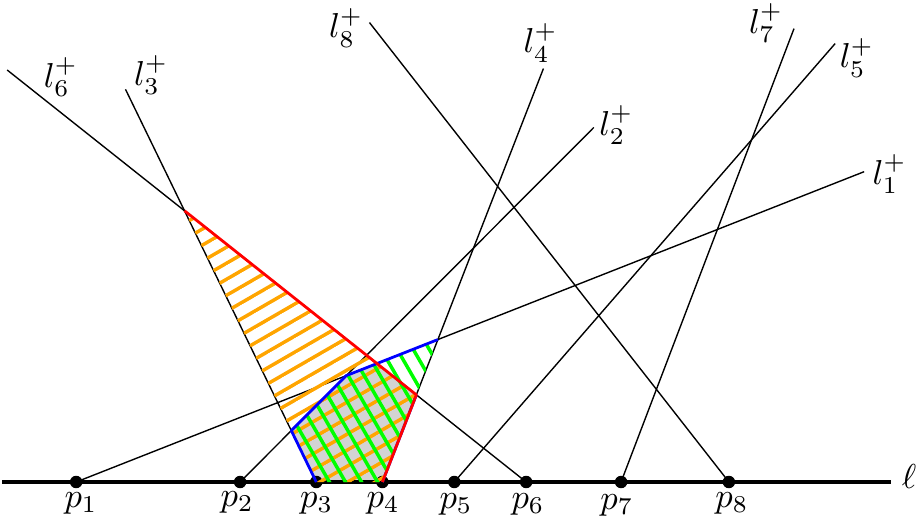}
\caption{\footnotesize Illustrating the regions $A_3$ (green backslash region) and $B_4$ (orange slash region). The gray region, which is $A_3\cap B_4$, is $C_3^+$. The blue chain is $\alpha_3$ and the red chain is $\beta_4$.}
\label{fig:regions}
\end{center}
\end{minipage}
\end{figure}

Recall that both $\alpha_i$ and $\beta_{i+1}$ are $y$-monotone convex chains with their lower endpoints at $\ell$. Let $A_i$ refer to the region of the plane bounded by $\overline{p_ip_{i+1}}$, $\alpha_i$, and the segment of $l_{i+1}^+$ on $F$ (see Fig.~\ref{fig:regions}). Note that if $l_{i+1}^+$ does not intersect $\alpha_i$, then the upper endpoint of $\alpha_i$ is at infinity and the entire $l_{i+1}^+$ is on $F$, and thus $A_i$ is unbounded. Since $\alpha_i$ is $y$-monotone, a point $p$ is in $A_i$ if and only if $p$ is to the left of $l_{i+1}^+$ and also to the right of $\alpha_i$ (i.e., the horizontal line through $p$ intersects $\alpha_i$ and $p$ is to the right of the intersection). Further, by the definition of $\alpha_i$, $A_i$ consists of all points to the left of $l_{i+1}^+$ and to the right of the right envelope of the lines $\{l_1^+,l_2^+,\ldots,l_i^+\}$. Therefore, by the property of the right envelope, we obtain that a point $p$ is in $A_i$ if and only if $p$ is to the left of $l_{i+1}^+$ and to the right of $l_j^+$ for all $j\leq i$. Due to the above observation on $C^+_i$, we obtain that $C^+_i\subseteq A_i$.

Similarly, define $B_{i+1}$ as the region of the plane bounded by $\overline{p_ip_{i+1}}$, $\beta_{i+1}$, and the segment of $l_{i}^+$ on $F'$ (see Fig.~\ref{fig:regions}). By an argument similar to the above, we have $C^+_i\subseteq B_{i+1}$.
Therefore, it follows that $C^+_i=A_i\cap B_{i+1}$ (see Fig.~\ref{fig:regions}).

Recall that the boundary $\partial A_i$ consists of $\overline{p_ip_{i+1}}$, $\alpha_i$, and the segment of $l_{i+1}^+$ on $F$. We call $\alpha_i$ and the segment of $l_{i+1}^+$ on $F$ the {\em left and right boundaries} of $A_i$, respectively. Similarly, we call $\beta_{i+1}$ and the segment of $l_{i}^+$ on $F'$ the {\em right and left boundaries} of $B_{i+1}$, respectively. For $C^+_i$, recall that it is convex and has $\overline{p_ip_{i+1}}$ as an edge. We define its left and right boundaries similarly. Specifically, if $C^+_i$ is bounded, then the highest vertex of $C_i^+$ partitions $\partial C_i^+\setminus \overline{p_ip_{i+1}}$ into two $y$-monotone convex chains, and the left one is called the {\em left boundary} of $C_i^+$ and the right one is called the {\em right boundary}. If $C_i^+$ is unbounded, then $\partial C_i\setminus \overline{p_ip_{i+1}}$ consists of two $y$-monotone convex chains going upwards to infinity, and the left one is called the {\em left boundary} of $C_i^+$ and the right one is called the {\em right boundary}. In either case, the left boundary of $C_i^+$ is $y$-monotone rightward convex and the right boundary of $C_i^+$ is $y$-monotone leftward convex.

Because $C_i^+=A_i\cap B_{i+1}$, and $A_i$, $B_{i+1}$, and $C_i^+$ are all bounded by $\overline{p_ip_{i+1}}$ from below, the left boundary of $C_i^+$ belongs to the right envelope of the left boundary of $A_i$ and the left boundary of $B_{i+1}$. Recall that if we move on the left boundary of $A_i$ from $p_i$, then we first move on $l_i^+$ and then always turn rightwards. On the other hand, the left boundary of $B_{i+1}$ is a segment of $l_i^+$ with $p_i$ as an endpoint. This implies that the left boundary of $C_i^+$ must be a subset of the left boundary of $A_i$, i.e., $\alpha_i$, because the left boundaries of $A_i$, $B_{i+1}$, and $C_i^+$ are all rightward convex. A symmetric argument shows that the right boundary of $C_i^+$ must be a subset of the right boundary of $B_{i+1}$, i.e., $\beta_{i+1}$. Therefore, if $C_i^+$ is closed, then the left boundary of $\partial C_i^+$ is the portion of $\alpha_i$ between $p_i$ and $q$, and the right boundary of $\partial C_i^+$ is the portion of $\beta_{i+1}$ between $p_{i+1}$ and $q$, where $q$ is the intersection of $\alpha_i$ and $\beta_{i+1}$. If $C_i^+$ is open, then the left boundary of $C_i^+$ is $\alpha_i$ and the right boundary of $C_i^+$ is $\beta_{i+1}$. This proves the lemma.
\end{proof}

Based on Lemma~\ref{lem:10}, the cells $C_i^+$, $1\leq i\leq n$, can be easily computed.
First of all, the two cells $C_0^+$ and $C_n^+$ are already available because both $\alpha_n$ and $\beta_1$ have been computed. For each $1\leq i\leq n-1$, we can compute $C_i^+$ using the two convex chains $\alpha_i$ and $\beta_{i+1}$ as follows. By Lemma~\ref{lem:10}, it suffices to find the intersection $q$ between $\alpha_i$ and $\beta_{i+1}$, or determine that such an intersection does not exist. This can be done in $O(|\alpha_i|+|\beta_{i+1}|)$ time by a straightforward sweeping algorithm similar to that for merging two sorted lists. Indeed, starting from $\ell$, we sweep a horizontal line $h$ upwards. During the sweeping, we maintain the edges of $\alpha_i$ and $\beta_{i+1}$ intersecting $h$. An event happens if $h$ encounters a vertex of $\alpha_i$ or $\beta_{i+1}$. Consider an event at a vertex $a\in \alpha_i$, i.e., $a\in h$. Let $a'$ be the next vertex of $\alpha_i$ after $a$, i.e., $\overline{aa'}$ is the edge of $\alpha_i$ right above $h$ (see Fig.~\ref{fig:sweep}). Let $\overline{bb'}$ be the edge of $\beta_{i+1}$ currently intersecting $h$ such that $b'$ is the upper vertex of the edge. To process the event, we first check whether $\overline{aa'}$ and $\overline{bb'}$ intersect. If yes, then their intersection is $q$ and we stop the algorithm. Otherwise, we proceed on the next event, which is the lower point of $a'$ and $b'$. If $q$ is not found after all events are processed, then $\alpha_i$ and $\beta_{i+1}$ do not intersect and we stop the algorithm. It is easy to see that the algorithm runs in $O(|\alpha_i|+|\beta_{i+1}|)$ time.

\begin{figure}[t]
\begin{minipage}[t]{\textwidth}
\begin{center}
\includegraphics[height=1.5in]{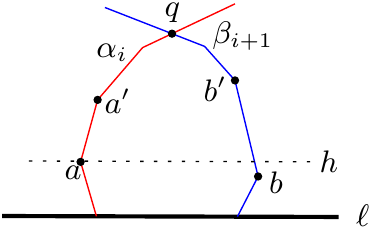}
\caption{\footnotesize Processing an event at $a$.}
\label{fig:sweep}
\end{center}
\end{minipage}
\vspace{-0.15in}
\end{figure}

Therefore, computing all cells $C_i^+$ for all $i=0,1,\ldots,n$ takes time proportional to $\sum_{i=1}^{n}|\alpha_i|+\sum_{i=1}^{n}|\beta_i|$, which is $O(n)$. Consequently, $Z^+(\ell)$, which is the disjoint union of all cells $C_i^+$, for all $i=0,1,\ldots,n$, is obtained. Note that since we already have the sorted list $p_1,p_2,\ldots,p_n$, we could also easily connect cells $C_i^+$ together from left to right on $\ell$.

Again, we can use a similar algorithm to compute $Z^-(\ell)$ in $O(n)$ time. Finally, the zone $Z(\ell)$ is simply the disjoint union of $Z^+(\ell)$ and $Z^-(\ell)$. The following theorem summarizes our result.

\begin{theorem}
Given a set $L$ of $n$ lines in the plane and another line $\ell$, the zone $Z(\ell)$ in the arrangement of $L$ can be computed in $O(n\log n)$ time. If the sorted list of the intersections between $\ell$ and all lines of $L$ is known, then $Z(\ell)$ can be computed in $O(n)$ time.
\end{theorem}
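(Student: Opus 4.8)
The plan is to assemble the theorem from the machinery already developed in this section. There are two claims: an $O(n\log n)$ bound with no preprocessing, and an $O(n)$ bound once the sorted list of intersections is given. I would prove the second, stronger, claim first, and then obtain the first by simply prepending the sorting step.

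First I would invoke the reduction stated at the start of the section: by an affine change of coordinates we may assume $\ell$ is the $x$-axis, and by the general position assumption every line of $L$ meets $\ell$ in a distinct point, so the sorted list $l_1,\ldots,l_n$ and the points $p_0,p_1,\ldots,p_{n+1}$ are well defined. Given the sorted list, I would run the Graham's scan style procedure to build the forward forest $F$ together with its decomposition into the $y$-monotone rightward convex chains $\alpha_1,\ldots,\alpha_n$, and the analogous procedure to build the backward forest $F'$ with its leftward convex chains $\beta_1,\ldots,\beta_n$. As argued above, each procedure runs in $O(n)$ time, since the $i$-th iteration costs time proportional to $|\alpha_{i-1}|$ (resp. the corresponding $\beta$ chain) and the chains form edge-disjoint decompositions of forests with at most $2n-1$ edges.

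Next I would compute the cells $C_i^+$ for $0\le i\le n$. By Lemma~\ref{lem:10}, the boundary $\partial C_i^+$ is determined by $\alpha_i$ and $\beta_{i+1}$ (with $\alpha_n$ alone giving $C_n^+$ and $\beta_1$ alone giving $C_0^+$), so it suffices to locate the intersection $q=\alpha_i\cap\beta_{i+1}$ or to certify that none exists. For this I would use the horizontal-sweep merge described above, which advances along the two chains in tandem in $O(|\alpha_i|+|\beta_{i+1}|)$ time. Summing over $i$, the total cost is proportional to $\sum_{i=1}^{n}|\alpha_i|+\sum_{i=1}^{n}|\beta_i|=O(n)$, again because the $\alpha_i$ (resp. $\beta_i$) are edge-disjoint pieces of a forest of linear size. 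This yields $Z^+(\ell)$ as the disjoint union of the $C_i^+$; the symmetric algorithm below $\ell$ yields $Z^-(\ell)$ in $O(n)$ time, and $Z(\ell)$ is their disjoint union. For the first claim, the only extra work is computing the $n$ intersections of $\ell$ with $L$ and sorting them, which takes $O(n\log n)$ time and produces exactly the sorted list assumed by the $O(n)$ procedure.

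I expect no serious obstacle, since the substantive content lives in Lemma~\ref{lem:10} and in the forest-construction argument, both already in hand; the theorem is essentially a bookkeeping assembly of the pieces. The one point I would take care over is the time accounting: I must make explicit that each edge of $F$ (resp. $F'$) is charged exactly once, so that both the per-iteration construction costs and the per-cell merge costs telescope to $O(n)$ rather than degrading to $O(n^2)$. I would also close with a remark that the general position assumption can be dropped — horizontal lines and coincident intersections on $\ell$ are handled either by an infinitesimal perturbation or by the standard symbolic techniques cited earlier — so that the stated bounds hold in full generality.
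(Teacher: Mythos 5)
Your proposal is correct and follows essentially the same route as the paper: the theorem is precisely the assembly of the section's machinery (sort the intersections, build $F$ and $F'$ with their chain decompositions via the Graham's scan style procedure, apply Lemma~\ref{lem:10} together with the merging sweep to get each $C_i^+$ in $O(|\alpha_i|+|\beta_{i+1}|)$ time, and sum using edge-disjointness), with the symmetric computation below $\ell$ and the sorting step accounting for the $O(n\log n)$ bound. Your handling of degeneracies by perturbation or symbolic methods is slightly less explicit than the paper's direct treatment (cutting cells along the lowest horizontal line and tie-breaking coincident intersections by slope), but this does not affect correctness.
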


\paragraph{Dealing with degeneracies.}
There are two degenerate cases: (1) $L$ has horizontal lines; (2) $\ell$ contains intersections of lines of $L$.

To handle the first case, without loss of generality, we assume that $L$ has horizontal lines above $\ell$, and among those, let $\ell'$ be the lowest one. We first compute the ``upper zone'' $Z^+(\ell)$ as usual without considering the horizontal lines of $L$. Then, for each cell $C_i^+$ of $Z^+(\ell)$, we simply cut it along $\ell'$ (alternatively, we could easily incorporate this cut operation into our sweeping algorithm for computing $C_i^+$). The union of all cells $C_i^+$ after the cut is the upper zone $Z^+(\ell)$ of $L$ including all horizontal lines.
The total time of the algorithm does not change asymptotically.

To handle the second case, what really matters is the sorted list $l_1,l_2,\ldots,$ of $L$, which is used in our forest decomposition algorithm. If two or more lines of $L$ have a common intersection on $\ell$, then we break the tie by further comparing their slopes: a line $l$ is placed in the sorted list in the front of another line $l'$ if the half-line of $l$ above $\ell$ is left of that of $l'$.
After having the sorted list of $L$, we can run exactly the same algorithm as before.

\paragraph{Remark.} Our algorithm also provides a simple proof for the combinatorial size of the zone $Z(\ell)$.
If $L$ has a horizontal line, a slight rotation of it only increases the complexity of $Z(\ell)$. Hence, it suffices to assume that $L$ does not have any horizontal line.
According to our algorithm, each edge of $Z^+(\ell)$ lies on an edge of one of the two forests $F$ and $F'$. As discussed before, each forest has at most $2n-1$ edges (including the degenerate cases). Hence, the total number of edges of $Z^+(\ell)$ is at most $4n-2$. Therefore, the total number of edges of the zone $Z(\ell)$ is at most $8n-4$, which matches the bounds obtained in~\cite{ref:ChazelleTh85,ref:EdelsbrunnerAr92,ref:EdelsbrunnerCo86}.




\end{document}